\def\argmax{\mathop{\rm \arg\!\max}}
\newtheorem{theorem}{Theorem}
\newtheorem{corollary}{Corollary}
\newtheorem{lemma}{Lemma}
\def\bh{{\bf h}}
\def\bn{{\bf n}}
\def\bq{{\bf q}}
\def\bw{{\bf w}}
\def\by{{\bf y}}
\def\bI{{\bf I}}
\def\bR{{\bf R}}
\def\cA{\mbox{$\mathcal{A}$}}
\def\cC{\mbox{$\mathcal{C}$}}
\def\cN{\mbox{$\mathcal{N}$}}
\def\cR{\mbox{$\mathcal{R}$}}
\def\cT{\mbox{$\mathcal{T}$}}
\def\bbE{\mbox{$\mathbb{E}$}}
\def\bbV{\mbox{$\mathbb{V}$}}
\newcommand{\thickhline}{%
    \noalign {\ifnum 0=`}\fi \hrule height 1pt
    \futurelet \reserved@a \@xhline
}
\newcolumntype{"}{@{\hskip\tabcolsep\vrule width 1pt\hskip\tabcolsep}}
\title{Analysis of Ergodic Rate for Transmit Antenna Selection in Low-Resolution ADC Systems}
\author{
Jinseok Choi and Brian L. Evans \thanks{
Copyright (c) 2015 IEEE. Personal use of this material is permitted. However, permission to use this material for any other purposes must be obtained from the IEEE by sending a request to pubs-permissions@ieee.org.

J. Choi, and B. L. Evans are with the Wireless Networking and Communication Group, Department of Electrical and Computer Engineering, The University
of Texas at Austin, Austin, TX 78701 USA. (e-mail: jinseokchoi89@utexas.edu, bevans@ece.utexas.edu). 
The authors were supported by gift funding from Huawei Technologies.
}
}
\begin{document}
\maketitle
\thispagestyle{empty}

\begin{abstract}
In this paper, we analyze the ergodic rate of single transmit antenna selection (TAS) in low-resolution analog-to-digital converter (ADC) systems.
Using low-resolution ADCs is a potential power-reduction solution for multiple antenna systems.
Low-resolution ADC systems with TAS can further reduce cost and power consumption in wireless transceivers.
Considering such systems, we derive the approximated lower bound of the ergodic rate with TAS.
Here, we exploit the approximated distribution of the sum of Weibull random variables to address the challenge involved in analyzing the quantization error.
We, then, derive the approximated ergodic rate with TAS for a single receive antenna in closed form, which reveals the TAS gain in low-resolution ADC systems.
The upper bound of a single transmit and single receive antenna system under coarse quantization is derived to compare with the ergodic rate of the TAS system.
The analysis shows that the TAS method achieves a large improvement in ergodic rate with a moderate number of transmit antennas.
Simulation results validate the derived ergodic rates and resulting intuition.
\end{abstract}

\begin{IEEEkeywords}
Low-resolution ADC systems, transmit antenna selection, antenna selection gain, ergodic rate. 
\end{IEEEkeywords}


\vspace{-1 em}
\section{Introduction}
\label{sec:intro}





Employing low-resolution ADCs has been widely studied to reduce the power consumption of base stations and mobile devices \cite{mezghani2007ultra,mo2015capacity}.
With the advent of large-scale antenna systems such as massive multiple-input multiple-output (MIMO) for sub-6GHz \cite{marzetta2010noncooperative, larsson2014massive} and millimeter wave (mmWave) communications \cite{pi2011introduction, rappaport2013millimeter}, the importance of low-resolution ADCs has greatly increased.
Consequently, low-resolution ADC systems have been considered as a possible low-power solution for future wireless systems \cite{jacobsson2017throughput} with their large number of antenna outputs coupled with wide bandwidth.

For low-resolution ADC systems, the analysis of achievable rate was performed in \cite{orhan2015low, fan2015uplink, jacobsson2017throughput}.
In \cite{orhan2015low}, the achievable rate for the mmWave systems was investigated to show the tradeoff between the resolution of ADCs and operating bandwidth.
In \cite{fan2015uplink,jacobsson2017throughput}, the approximated ergodic rates of uplink massive MIMO systems with linear receivers were derived in closed form. 
User scheduling in uplink low-resolution ADC systems was also investigated in \cite{choi2018user}.
Recently, a receive antenna selection algorithm was generalized to different quantization resolutions in \cite{choi2018antenna} to incorporate the quantization error penalty.
Transmit antenna selection (TAS), however, is still understudied in low-resolution ADC systems.
Although the benefits of TAS were widely analyzed in infinite-resolution ADC systems \cite{sanayei2007capacity,lin2012performance}, 
the analysis cannot capture the impact of coarse quantization. 
{\color{black} Consequently, considering the coarse quantization effect in the TAS analysis is still an open question.}

In this paper, we investigate TAS with a maximum ratio combining (MRC) receiver in low-resolution ADC systems and derive the approximated lower bound of the ergodic rate for single TAS.
A single transmit antenna which maximizes the achievable rate is selected.
Single TAS problems were investigated for infinite-resolution ADC systems in \cite{chen2005analysis,zhang2007outage,zhang2008performance}. 
Unlike the infinite-resolution ADC case, the received signal-to-noise ratio (SNR) with low-resolution ADCs involves a quantization error, which is a function of selected channels that makes the analysis particularly challenging.
To address such a challenge, we adopt an approximated probability density function (PDF) and cumulative distribution function (CDF) of the sum of Weibull distributions, and derive the lower bound of ergodic rate with TAS. 
To further analyze the TAS gain, we derive the approximated ergodic rate of TAS for a single receive antenna. 
The derived ergodic rates reveal that under coarse quantization, the TAS gain for a single receive antenna is limited because selecting an antenna with a larger channel gain also increases the quantization error.
The analysis, however, still shows that the TAS method achieves a large improvement in ergodic rate with a moderate number of transmit antennas.
The simulation results validate the derived ergodic rate expressions and resulting intuition.

{\color{black} The TAS can be used for both the downlink and uplink communications where a transmitter with multiple antennas and a single radio frequency (RF) chain selects a single antenna and a receiver with multiple antennas and multiple RF chains quantizes received signals with low-resolution ADCs to reduce cost and power consumption of the transmitter and receiver.
In this paper, we focus on a single user communication with the TAS method and an extension to a multiuser communication is a potential direction of future work.}

{\color{black}
{\it Notation}: $\bf{A}$ is a matrix and $\bf{a}$ is a column vector. 
$\mathbf{A}^{H}$ and $\mathbf{A}^T$  denote conjugate transpose and transpose, and $ \mathbf{a}_i$ indicates the $i$th column vector of $\bf A$. 
We denote $a_{i,j}$ as the $\{i,j\}$th element of $\bf A$ and $a_{i}$ as the $i$th element of $\bf a$. 
$\mathcal{CN}(\mu, \sigma^2)$ is the complex Gaussian distribution with mean $\mu$ and variance $\sigma^2$. 
$\mathbb{E}[\cdot]$ and $\bbV[\cdot]$ represent an expectation and variance operators, respectively.
The correlation matrix is denoted as ${\bf R}_{\bf xy} = \mathbb{E}[{\bf x}{\bf y}^H]$.
The diagonal matrix $\rm diag\{\bf A\}$ has $\{a_{i,i}\}$ at its $i$th diagonal entry.
${\bf I}_N$ denotes the $N$ dimensional identity matrix.
$\bf 0$ denotes a zero vector with a proper dimension.
$\|\bf A\|$ represents $L_2$ norm. 
$|\cdot|$ indicates an absolute value and cardinality for a scalar value $a$ and a set $\cA$, respectively.
}
\section{System Model}
\label{sec:sys_model}

We consider a wireless link in which a transmitter equipped with $N_t$ antennas selects a single transmit antenna.
A receiver equipped with $N_r$ antennas employs low-resolution ADCs with $b$ quantization bits.
With the transmit power $p_t$, the received baseband analog signal is given as ${\bf y} \!=\!\sqrt{p_t}{\bf h}_{i}{s}\! +\! {\bf n}$,
where ${s}$ and ${\bf n}$ denote the zero mean and unit variance transmit symbol and the additive white Gaussian noise (AWGN) vector ${\bf n} \sim \cC\cN({\bf 0}, \sigma^2\bI_{N_r})$, respectively, and ${\bf h}_{i}$ is the channel vector that corresponds to the transmit antenna $i$.
We consider a Rayleigh fading channel $\bh_{i} \sim \cC\cN({\bf 0},\bI_{N_r})$.

Each real and imaginary component of the complex output $y_i$ is quantized by the ADCs.
Since an additive quantization noise model (AQNM) with scalar gain $\alpha$ \cite{fletcher2007robust} shows reasonable accuracy in low and medium SNR ranges \cite{orhan2015low}, we adopt the AQNM to linearize the quantization process as a function of quantization bits $b$.
For a scalar minimum mean squared error quantizer (MMSE), the quantized signal becomes
\begin{align} 
	\label{eq:yq}
		{\bf y}_{\rm q}\!  =\! \mathcal{Q}\bigl({\rm Re}\{{\bf y}\}\bigr)  + j\mathcal{Q}\bigl({\rm Im}\{{\bf y}\}\bigr) 
	= \alpha \sqrt{p_t} {\bf h}_i{ s}+ \alpha {\bf n} +{\bf q}
\end{align} 
where $\mathcal{Q}(\cdot)$, $\alpha$, and $\bq$ are the element-wise quantizer, quantization gain, and additive quantization noise vector, respectively.
The quantization gain $\alpha$ is a function of quantization bits $b$ and is defined as $\alpha = 1- \beta $, where $\beta = {\mathbb{E}[|{y}- {y}_{{\rm q}}|^2]}/{\mathbb{E}[|{y}|^2]}$. 
{\color{black} For a scalar MMSE quantizer with Gaussian signaling ${s} \sim \mathcal{CN}({0}, 1)$, the values of $\beta$ for $b \leq 5$ \cite{max1960quantizing} are shown in Table 1 in \cite{fan2015uplink} and $\beta$ is approximated as $\beta \approx \frac{\pi\sqrt{3}}{2} 2^{-2b}$ for $b > 5$ \cite{gersho2012vector}.
The quantization noise ${\bf q}$ is uncorrelated with the quantization input ${\bf y}$ and follows ${\bf q} \sim \mathcal{CN}({\bf 0},{\bf R}_{\bf qq})$ \cite{fletcher2007robust, fan2015uplink} where }
\begin{align}
	\label{eq:cov}
	\mathbf{R}_{\bf qq}= \alpha(1-\alpha)\,{\rm diag}(p_t{\bh}_i{\bh}_i^H + \sigma^2{\mathbf{I}_{N_r}}).
\end{align}

We consider the MRC combiner $\bw = \bh_{i}$ at the receiver.
Then, the received signal becomes
\begin{align}
	\label{eq:z}
	z = \bw^H \by_{\rm q} = \alpha \sqrt{p_t}\bw^H\bh_i s + \alpha \bw^H\bn + \bw^H\bq. 
\end{align}
The achievable rate of the considered system is computed as 
\begin{align}
	\label{eq:rate}
	\cR(\alpha) = \log_2\left(1+ \frac{\alpha^2 p_t\|\bh_i\|^4}{\alpha^2\sigma^2\|\bh_i\|^2 + \bh_i^H \bR_{\bq \bq}\bh_i}\right).
\end{align}
Note that the received SNR in \eqref{eq:rate} includes the quantization noise term $\bh_i^H\bR_{\bq \bq}\bh_i$, which lowers the received SNR and makes the analysis more challenging compared to infinite-resolution ADC systems.


\section{Mathematical Preliminaries}
\label{sec:weibull}
\thispagestyle{empty}

In this section, we introduce an approximated PDF of the sum of Weibull random variables for the ergodic rate analysis with TAS.
Let $X = \sum_{i=1}^{N_r}W_i$ be the sum of $N_r$ IID Weibull random variables $W_i$ with the PDF for $w \geq 0$: 
\begin{align}
	\nonumber
	f_{W_i}(w) = \frac{k_iw^{k_i-1}}{\Omega_i}e^{-\frac{w^{k_i}}{\Omega_i}}.
\end{align}
Here, $k_i\! >\!0 $ is a shape parameter and $\Omega_i\! =\! \bbE\big[W_i^{k_i}\big]$ is a scale parameter. 
The PDF and CDF of $X$ are approximated\footnote{{\color{black}The approximated distribution of the Weibull sum is a generalized Gamma distribution $f(x;a,d,p)$ with parameters $a =\left({\Omega}/{\mu}\right)^{\frac{1}{k}}$, $d = k\mu$, and $p = k$.}}
 as \cite{yacoub2006simple}
 \begin{gather}
	\label{eq:fw}
	f_X(x) \!=\! \frac{k\mu^\mu x^{k\mu - 1}}{\Omega^\mu \Gamma(\mu)}e^{-\frac{\mu x^k}{\Omega}}, ~ 
	F_X(x) \!=\! 1\! -\! \frac{\Gamma\left(\mu,{\mu x^k}/{\Omega}\right)}{\Gamma(\mu)}
\end{gather}
where $k >0 $ is a shape parameter, $ \Omega = \bbE\big[X^k\big]$ is a scale parameter, and $\mu\! =\! \bbE^2\big[X^k \big]/\bbV\big[X^k\big]$.
Here, $\Gamma(\mu)$ is the gamma function, and $\Gamma(a,z) = \int_z^\infty t^{a-1}e^{-t}dt$ is the upper incomplete gamma function.
We use $\gamma(a,z)= \int_0^z t^{a-1}e^{-t}dt$ to denote the lower incomplete gamma function. 
{\color{black} Note that $\Gamma(a,z)$ for $a=0$ is equivalent to the exponential integral function $E_1(z)$.}
{\color{black}The parameters, $k, \mu$ can be computed by solving \cite{yacoub2006simple}
\begin{gather}
	\nonumber
	\Gamma^2(\mu + {1}/{k})\bbE[X^2] = \Gamma(\mu)\Gamma(\mu + {2}/{k})\bbE^2[X]	\\
	\nonumber
	\Gamma^2(\mu + {2}/{k})\bbE[X^4] = \Gamma(\mu)\Gamma(\mu + {4}/{k})\bbE^2[X^2]
\end{gather}
where the moment $\bbE[X^n]$ is given by
\begin{align}
	\nonumber
	\bbE[X^n] = &\sum_{n_1=0}^{n}\sum_{n_2=0}^{n_1}\cdots\sum_{n_{N_r-1}=0}^{n_{N_r-2}}\left(\begin{matrix}n \\ n_1\end{matrix}\right)\left(\begin{matrix}n_1 \\ n_2\end{matrix}\right)\cdots\left(\begin{matrix}n_{N_r-2} \\ n_{N_r -1}\end{matrix}\right)\\
		\nonumber
		& \times \bbE[W_1^{n-n_1}]\bbE[W_2^{n_1-n_2}] \cdots \bbE[W_{N_r}^{n_{N_r}-1}].
\end{align}
Here, $n$ is a positive integer and $\bbE[W_i^n] = \Omega_i^{{n}/{k_i}}\Gamma\left(1+{n}/{k_i}\right) $.
Then, $\Omega$ is obtained as $\Omega^{1/k} = {\mu^{1/k}\Gamma(\mu)E[X]}/{\Gamma(\mu + 1/k)}$.}

\section{Transmit Antenna Selection}
\label{sec:DL_txantenna}

{\color{black} In this section, we analyze the ergodic rate of the TAS system assuming that the channel state information is known perfectly at both the transmitter and receiver.}
Based on the achievable rate in \eqref{eq:rate}, the ergodic rate of TAS with the MRC receiver under quantization error is	
\begin{align}
	\label{eq:Rtas}
	\bar \cR_{\rm t}(\alpha) 
	\!=\! \bbE\!\!\left[\max_{i\in \mathcal{T}} \log_2\!\left(\!1\!+\! \frac{\rho\alpha \|\bh_i\|^4}{\|\bh_i\|^2\! +\! \rho(1\!-\!\alpha)\!\sum_{j=1}^{N_r}\! |h_{j,i}|^4}\!\right)\!\right]\!\!\!
\end{align}
where $\rho = p_t/\sigma^2$ is the transmit SNR, $h_{j,i}$ is the $j$th element of $\bh_i$, and $\cT = \{1,\dots,N_t\}$.
{\color{black} In this paper, we specify the number of transmit antennas and receive antennas $N_t,N_r$ as $\bar{\cR}_t(\alpha;N_t,N_r)$ if necessary.}
{\color{black}
\begin{lemma}
	\label{lem:lowerbound}
	The ergodic rate of TAS with MRC in \eqref{eq:Rtas} is lower bounded by
	\begin{align}
		\label{eq:lowerbound}
		\bar \cR_{\rm t}(\alpha)\! \geq\! \bbE\!\left[ \log_2\!\left(\!1\!+\! \frac{\rho\alpha \|\bh_{i^\star}\|^4}{\|\bh_{i^\star}\|^2 + \rho(1\!-\!\alpha)\!\sum_{j=1}^{N_r}\! |h_{j,{i^\dagger}}\!|^4}\!\right)\!\right]
	\end{align}
where $i^\star\!=\! \argmax_{i\in \mathcal{T}}\!\|\bh_i\|$ and ${i^\dagger} \!=\! \argmax_{i \in \mathcal{T}} \!\sum_{j=1}^{N_r} |h_{j,i}|^4$.
\end{lemma}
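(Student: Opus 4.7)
The plan is to obtain the lower bound via a two-step relaxation of the inner maximum in \eqref{eq:Rtas}. Denoting the per-antenna SINR by
\begin{align*}
\gamma(i) \,=\, \frac{\rho\alpha \|\bh_i\|^4}{\|\bh_i\|^2 + \rho(1-\alpha)\sum_{j=1}^{N_r}|h_{j,i}|^4},
\end{align*}
the first step notes that since $i^\star \in \cT$, the maximum over $i \in \cT$ is at least the value at $i^\star$, so $\max_{i\in\cT}\log_2(1+\gamma(i)) \geq \log_2(1+\gamma(i^\star))$ by monotonicity of $\log_2(1+\cdot)$. This step uses no structural property of $i^\star$ beyond the fact that it is a feasible index in $\cT$.

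The second step decouples the two appearances of the transmit-antenna index in $\gamma(i^\star)$. By definition of $i^\dagger$,
\begin{align*}
\sum_{j=1}^{N_r}|h_{j,i^\star}|^4 \,\leq\, \sum_{j=1}^{N_r}|h_{j,i^\dagger}|^4,
\end{align*}
so replacing $i^\star$ by $i^\dagger$ inside the quantization-noise term of the denominator of $\gamma(i^\star)$ enlarges the denominator while leaving the numerator $\rho\alpha\|\bh_{i^\star}\|^4$ unchanged; the resulting SINR is therefore no larger than $\gamma(i^\star)$. Composing with $\log_2(1+\cdot)$ and taking expectation by linearity then yields \eqref{eq:lowerbound}.

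There is no genuine technical obstacle here: each step is a single inequality and the conclusion follows by monotone composition and linearity of expectation. The subtlety lies not in the proof but in why the bound is worth stating. The difficulty in analyzing \eqref{eq:Rtas} directly is that the same random column $\bh_i$ appears both in the numerator and in the quantization-noise term of the denominator, and the outer max is taken over an index that couples these two through a data-dependent tradeoff. Splitting the index into $i^\star$ and $i^\dagger$ breaks this coupling: the numerator is governed solely by $\|\bh_{i^\star}\|^2$, the maximum of $N_t$ i.i.d.\ chi-square-type sums, while the added quantization term in the denominator is governed solely by $\sum_j |h_{j,i^\dagger}|^4$, the maximum of $N_t$ i.i.d.\ sums of Weibull random variates with shape parameter $1/2$. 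This is precisely the distributional form to which the generalized-gamma approximation of Section \ref{sec:weibull} applies, so the lemma sets the stage for the subsequent closed-form ergodic-rate derivation.
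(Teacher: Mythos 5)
Your proof is correct and follows essentially the same two-step relaxation as the paper: first bound the maximum below by the value at $i^\star$, then enlarge the quantization-noise term by replacing $i^\star$ with $i^\dagger$. If anything, your justification of the second step is slightly more careful than the paper's, since you invoke the pointwise (per-realization) inequality $\sum_j|h_{j,i^\star}|^4 \leq \sum_j|h_{j,i^\dagger}|^4$, which is what the argument actually requires (the paper cites only the corresponding inequality in expectation); note only that the final step uses monotonicity of expectation rather than linearity.
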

\begin{proof}
	 The ergodic rate in \eqref{eq:Rtas} is lower bounded by the ergodic rate with selecting the antenna $i^\star\!=\! \argmax_{i\in \mathcal{T}}\!\|\bh_i\|$ since selecting the antenna $i^\star$ may not be optimal due to the quantization error term $\rho(1\!-\!\alpha)\!\sum_{j=1}^{N_r} \!|h_{j,{i^\star}}|^4$. 
	The ergodic rate with the antenna $i^\star$ is further lower bounded by considering ${i^\dagger}\! =\! \argmax_{i\in\mathcal{T}} \!\sum_{j=1}^{N_r}\! |h_{j,i}|^4$ for the quantization error term because of the inequality: $\bbE\big[\sum_{j}\! |h_{j,i^\star}|^4\big] \leq \bbE\big[\sum_{j}\! |h_{j,{i^\dagger}}|^4\big]$.
\end{proof}
Note that the lower bound in \eqref{eq:lowerbound} becomes an exact expression when there is not any quantization error, i.e., the quantization gain is $\alpha = 1$.
Accordingly, the lower bound becomes tighter as the number of quantization bits $b$ increases. 
Using  \eqref{eq:lowerbound}, we derive the approximated lower bound of the ergodic rate with TAS in closed form.
In addition, we show that the derived rate provides a good lower bound for a small number of quantization bits in Section~\ref{sec:simulation}.  }
\begin{theorem}
	\label{thm:ER1}
	The lower bound of the ergodic rate of TAS with MRC under coarse quantization is approximated as
	\begin{align}
		\label{eq:R_LB}
		\bar{\cR}^{\rm lb}_{\rm t}(\alpha) \approx \log_2\!\left(\!1+ \frac{\rho\alpha \int_0^\infty zG(z)dz}{\int_0^\infty (G(z)+ \rho(1-\alpha)G_q(z))dz}\!\right)
	\end{align}
	where 
	\begin{gather}
		\nonumber
		G(z)=	e^{-z}z^{N_r}\left(\frac{\gamma(N_r, z)}{\Gamma(N_r)}\right)^{N_t -1}\\
		\nonumber
		G_q(z) = \frac{2\Gamma(N_r+1)}{\Gamma\left(\mu+\frac{1}{k}\right)} e^{-z}z^{\mu + \frac{1}{k} -1}\left(\frac{\gamma(\mu, z)}{\Gamma(\mu)}\right)^{N_t -1}. 
	\end{gather}
\end{theorem}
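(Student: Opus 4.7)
The plan is to evaluate the expectation in the Lemma~\ref{lem:lowerbound} bound by (i) pulling the logarithm outside the expectation via the standard rate-analysis approximation
\[
\bbE\!\left[\log_2\!\left(1+\frac{A}{B}\right)\right]\approx \log_2\!\left(1+\frac{\bbE[A]}{\bbE[B]}\right),
\]
which treats numerator and denominator moments separately, and (ii) computing the three moments $\bbE[\|\bh_{i^\star}\|^4]$, $\bbE[\|\bh_{i^\star}\|^2]$, and $\bbE[\sum_{j}|h_{j,i^\dagger}|^4]$ in integral form that matches the functions $G(z)$ and $G_q(z)$ in the statement.

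For the signal-related moments, I would use the fact that $\|\bh_i\|^2$ is a sum of $N_r$ independent unit-mean exponentials, hence $\mathrm{Gamma}(N_r,1)$, with CDF $\gamma(N_r,z)/\Gamma(N_r)$. By the standard order statistic, the PDF of $X\bydef \|\bh_{i^\star}\|^2=\max_{i}\|\bh_i\|^2$ is
\[
f_X(z)=\frac{N_t\, z^{N_r-1}e^{-z}}{\Gamma(N_r)}\left(\frac{\gamma(N_r,z)}{\Gamma(N_r)}\right)^{\!N_t-1},
\]
so that $\bbE[X^n]=\frac{N_t}{\Gamma(N_r)}\int_0^\infty z^{n-1}G(z)\,dz$. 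Plugging $n=2$ and $n=1$ yields exactly $\int zG(z)dz$ in the numerator and $\int G(z)dz$ as the first term of the denominator, up to the common factor $N_t/\Gamma(N_r)$ which will cancel at the end.

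For the quantization term, the key observation is that $|h_{j,i}|^2\sim\mathrm{Exp}(1)$ implies $|h_{j,i}|^4$ is Weibull with shape $k_j=\tfrac12$ and scale $\Omega_j=1$, so $Y_i\bydef\sum_{j=1}^{N_r}|h_{j,i}|^4$ is a sum of $N_r$ i.i.d.~Weibull variables. I would invoke Section~\ref{sec:weibull} to approximate the PDF/CDF of $Y_i$ by the generalized Gamma form \eqref{eq:fw} with parameters $k,\mu,\Omega$ determined by the moment-matching equations. For $Y\bydef Y_{i^\dagger}=\max_{i}Y_i$, I would take the $N_t$-th order statistic and perform the substitution $z=\mu y^k/\Omega$, which turns the density of $Y_i$ into a unit-scale Gamma($\mu$) density in $z$, giving $\bbE[Y]=N_t(\Omega/\mu)^{1/k}\frac{1}{\Gamma(\mu)}\int_0^\infty z^{\mu+1/k-1}e^{-z}(\gamma(\mu,z)/\Gamma(\mu))^{N_t-1}dz$. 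Using $\bbE[Y_i]=N_r\cdot\bbE[|h_{1,1}|^4]=2N_r$ together with the scale relation of the preliminaries, $(\Omega/\mu)^{1/k}=2N_r\Gamma(\mu)/\Gamma(\mu+1/k)$, collapses the prefactor to $\frac{2N_t N_r}{\Gamma(\mu+1/k)}$ times the integral, which is exactly $\frac{N_t}{\Gamma(N_r)}\int_0^\infty G_q(z)dz$ after writing $\Gamma(N_r+1)=N_r\Gamma(N_r)$.

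Putting the three moments into the ratio and clearing the common $N_t/\Gamma(N_r)$ from numerator and denominator yields \eqref{eq:R_LB}. The main obstacle I expect is identifying the correct normalization: the awkward constant $2\Gamma(N_r+1)/\Gamma(\mu+1/k)$ in $G_q$ is not natural from the order-statistic density alone, and only appears after combining the change of variable, the moment-matching identity for $(\Omega/\mu)^{1/k}$, and the cancellation of $\Gamma(N_r)/N_t$ against the signal-term integrals; a secondary concern is the heuristic step of approximating $\bbE[\log_2(1+A/B)]$ by $\log_2(1+\bbE[A]/\bbE[B])$, which ignores the correlation between the selected indices $i^\star$ and $i^\dagger$ and therefore must be flagged as an approximation (as reflected in the $\approx$ in the statement).
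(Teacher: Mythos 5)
Your proposal is correct and follows essentially the same route as the paper's proof: the $\log_2(1+\bbE[A]/\bbE[B])$ approximation (Lemma 1 of \cite{zhang2014power}), order statistics of the Gamma$(N_r,1)$-distributed $\|\bh_i\|^2$ for the signal moments, the generalized-Gamma approximation of the Weibull sum with the substitution $z=\mu y^k/\Omega$ for the quantization term, and the identity $(\Omega/\mu)^{1/k}=2N_r\Gamma(\mu)/\Gamma(\mu+1/k)$ to collapse the prefactor into $G_q$. Your bookkeeping of the common factor $N_t/\Gamma(N_r)$ and the identification $\Gamma(N_r+1)=N_r\Gamma(N_r)$ reproduce the stated normalizations exactly.
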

\begin{proof}
	From Lemma~\ref{lem:lowerbound}, the ergodic rate in \eqref{eq:Rtas} is lower bounded and further approximated as
	\begin{align}
		\nonumber
		&\bar \cR_{\rm t}(\alpha) \geq \bbE\!\left[ \log_2\!\left(\!1+ \frac{\rho\alpha \|\bh_{i^\star}\|^4}{\|\bh_{i^\star}\|^2 + \rho(1-\alpha)\sum_{j=1}^{N_r} |h_{j,{i^\dagger}}|^4}\!\right)\!\right]\\
		\label{eq:LB_proof1}
		&\!\!\stackrel{(a)}\approx \log_2\!\left(\!1+ \frac{\rho\alpha \bbE\left[\|\bh_{i^\star}\|^4\right]}{ \bbE\!\left[\|\bh_{i^\star}\|^2\right]\! +\!  \rho(1-\alpha)\bbE\!\left[\sum_{j=1}^{N_r} |h_{j,{i^\dagger}}|^4\right]}\!\right)
	\end{align}
	where $i^\star\!=\! \argmax_i\!\|\bh_i\|$, ${i^\dagger}\! =\! \argmax_i\! \sum_j \!|h_{j,i}|^4$, and  $(a)$ follows from Lemma 1 in \cite{zhang2014power}.
	Since $\|\bh_i\|^2 \!\sim\! \chi^2_{2N_r}$ follows the IID chi-squared distribution with $2N_r$ degrees of freedom, the PDF of $Y \!=\! \|\bh_{i^\star}\|^2$ is derived by using order statistics as
	\begin{align}
		\label{eq:fy}
		f_Y(y) = \frac{N_t}{(N_r - 1)!} \left(1-e^{-y}\sum_{n=0}^{N_r-1}\frac{y^n}{n!}\right)^{\!\!\!N_t -1}\!\!\!\!\!\! y^{N_r -1}e^{-y}.
	\end{align}  
	Using \eqref{eq:fy} with $\Gamma(N_r,y)/\Gamma(N_r) = e^{-y}\sum_{n=0}^{N_r-1}{y^n}/{n!}$, we compute the expectations in \eqref{eq:LB_proof1} as
	\begin{gather}
		\label{eq:Ey}
		\bbE\!\Big[\|\bh_{i^\star}\|^2\Big]\!\! =\!\! \frac{N_t}{(N_r-1)!}\!\!\int_0^\infty \!\!\!\left(\!1\!-\!\frac{\Gamma(N_r,y)}{\Gamma(N_r)}\!\right)^{\!\!N_t-1}\!\!\!\!\!\!\!\! y^{N_r} e^{-y}dy\\
		\label{eq:Ey2}
		\bbE\!\Big[\|\bh_{i^\star}\|^4\Big]\!\! =\!\! \frac{N_t}{(N_r\!-\!1)!}\!\!\int_0^\infty\! \!\! \left(\!1\!-\!\frac{\Gamma(N_r,y)}{\Gamma(N_r)}\!\right)^{\!\!N_t-1}\!\!\!\!\!\!\!\!y^{N_r+1} e^{-y}dy.\!\!
	\end{gather}
	To compute $\bbE\!\big[\sum_{j=1}^{N_r}\! |h_{j,{i^\dagger}}|^4\big]$ in \eqref{eq:LB_proof1}, the PDF of $\sum_{j=1}^{N_r}\! |h_{j,{i^\dagger}}|^4$ is necessary.
	Since $|h_{j,i}|^4$ is equal to a square of an exponential random variable with a rate parameter $\lambda \!=\! 1$, $|h_{j,i}|^4$ follows the Weibull distribution with a scale parameter $\Omega_{j,i}\! =\! 1$ and shape parameter $k_{j,i} \!=\! 1/2$, i.e., $W_{j,i} \!\sim \! {\rm Wei}(1,{1}/{2})$.
	Let $X_i  \!=\! \sum_{j=1}^{N_r} \!|h_{j,i}|^4$. 
	Then, the approximated PDF and CDF of $X_i$ are given in \eqref{eq:fw},
	and the PDF of $Z \!=\! X_{i^\dagger} \!=\! \max_i X_i$ is derived as
	\begin{align}
		\label{eq:fz}
		f_Z(z)\! =\! \frac{N_t k \mu^\mu}{\Omega^\mu \Gamma(\mu)} \!\!\Bigg(\!1- \frac{\Gamma\left(\mu,\frac{\mu z^k}{\Omega}\right)}{\Gamma(\mu)}\!\Bigg)^{\!\!N_t - 1} \!\!\!\!\!\!\!\!z^{k\mu-1} e^{-\frac{\mu z^k}{\Omega}}.
	\end{align}
	Replacing $y = \mu z^k/\Omega$, we derive $\bbE\big[\sum_{j=1}^{N_r} |h_{j,{i^\dagger}}|^4\big]$ as
	\begin{align}	
		\label{eq:Ez}
		\bbE[Z] \!\approx \! \frac{N_t }{\Gamma(\mu)}\!\left(\frac{\Omega}{\mu}\right)^{\!\!\frac{1}{k}}\!\! \int_0^\infty \!\!\left(\!1- \frac{\Gamma(\mu,y)}{\Gamma(\mu)}\!\right)^{\!\!N_t - 1} \!\!\!\!\!\!\!\!y^{\mu + \frac{1}{k} -1} e^{-y} dy.
	\end{align}
	Putting \eqref{eq:Ey}, \eqref{eq:Ey2}, and \eqref{eq:Ez} into \eqref{eq:LB_proof1}, the ergodic rate in \eqref{eq:LB_proof1} becomes \eqref{eq:R_LB} after simplification by using the definition of $\Omega^{1/k} = \mu^{1/k}\Gamma(\mu)E[X]/\Gamma(\mu + 1/k)$ \cite{yacoub2006simple}.
	Note that the expectation of the sum of $N_r$ Weibull random variables is $E[X] = 2N_r$ for the Weibull distribution with $\Omega_i = 1$ and $k_i = 1/2$. 
	This completes the proof.
\end{proof}
\thispagestyle{empty}
Theorem \ref{thm:ER1} derives the approximated lower bound of ergodic rate with TAS for multiple receive antennas as a function of system parameters.
{\color{black} Note that \eqref{eq:R_LB} becomes more accurate as the number of receive antennas $N_r$ increases \cite{zhang2014power}.}
We further derive the approximated ergodic rate of TAS with a single receive antenna $\bar \cR_t(\alpha;N_t,1)$ in closed form.
\begin{corollary}
	The ergodic rate of TAS for a single receive antenna under coarse quantization is approximated as
	\begin{align}
		\label{eq:R1}
		\bar{\cR}_{\rm t}(\alpha;N_t,1) \approx \log_2\left(1+ \frac{\rho\alpha \sum_{n=1}^{N_t}\frac{1}{n}}{1+\rho(1-\alpha)\sum_{n=1}^{N_t}\frac{1}{n}}\right).
	\end{align}
\end{corollary}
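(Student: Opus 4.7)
The plan is to specialize the setting $N_r=1$ directly to the original ergodic rate expression in \eqref{eq:Rtas} rather than to the closed-form bound in Theorem~\ref{thm:ER1}, because with a single receive antenna the lower bound step in Lemma~\ref{lem:lowerbound} collapses to an equality and a cleaner expression is available. First, I would note that for $N_r=1$ we have $\|\bh_i\|^2=|h_{1,i}|^2$ and $\sum_{j=1}^{N_r}|h_{j,i}|^4=|h_{1,i}|^4=\|\bh_i\|^4$. Substituting into \eqref{eq:Rtas} and canceling a factor of $|h_{1,i}|^2$ inside the SNR gives
\begin{align}
\nonumber
\bar \cR_{\rm t}(\alpha;N_t,1)=\bbE\!\left[\max_{i\in\cT}\log_2\!\left(1+\frac{\rho\alpha\,|h_{1,i}|^2}{1+\rho(1-\alpha)|h_{1,i}|^2}\right)\right].
\end{align}

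The SNR on the right is monotonically increasing in $|h_{1,i}|^2$, so the maximum over $i$ is attained exactly at $i^\star=\argmax_i|h_{1,i}|^2$; here the inequality in Lemma~\ref{lem:lowerbound} is tight because $i^\star=i^\dagger$. Next I would pull the max inside by setting $U=\max_i|h_{1,i}|^2$ and apply the ratio-of-expectations approximation (Lemma~1 of \cite{zhang2014power}, the same tool used in step $(a)$ of Theorem~\ref{thm:ER1}) to $X=\rho\alpha\,U$ and $Y=1+\rho(1-\alpha)U$, yielding
\begin{align}
\nonumber
\bar \cR_{\rm t}(\alpha;N_t,1)\approx\log_2\!\left(1+\frac{\rho\alpha\,\bbE[U]}{1+\rho(1-\alpha)\bbE[U]}\right).
\end{align}

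The remaining step is to evaluate $\bbE[U]$. Since $|h_{1,i}|^2$ are IID exponential with unit rate, $U$ is the maximum of $N_t$ IID unit exponentials, whose CDF is $(1-e^{-u})^{N_t}$. Using $\bbE[U]=\int_0^\infty\bigl(1-(1-e^{-u})^{N_t}\bigr)du$, expanding by the binomial theorem, and invoking the standard identity $\sum_{k=1}^{N_t}\binom{N_t}{k}(-1)^{k+1}/k=\sum_{n=1}^{N_t}1/n$ gives $\bbE[U]=\sum_{n=1}^{N_t}1/n$, which plugged into the previous display recovers \eqref{eq:R1}.

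No step here poses a serious obstacle: the main observation is simply that the $N_r=1$ case makes the numerator and quantization-noise terms share the same factor $|h_{1,i}|^4$, so a single channel magnitude governs selection and the integrals in Theorem~\ref{thm:ER1} reduce to the classical moment of the maximum of IID exponentials. The only point that deserves care is remarking that at $N_r=1$ the approximation $(a)$ of Theorem~\ref{thm:ER1} is the \emph{only} source of inexactness, since the TAS selection itself is exactly optimal and the Weibull-sum approximation is no longer needed.
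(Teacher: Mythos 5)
Your proposal is correct and follows essentially the same route as the paper: specialize \eqref{eq:Rtas} to $N_r=1$, observe that the SNR is monotone in $|h_i|^2$ so the selection is exactly $i^\star=\argmax_i|h_i|$, apply the ratio-of-expectations approximation of Lemma~1 in \cite{zhang2014power}, and evaluate the mean of the maximum of $N_t$ IID unit exponentials as the harmonic number $\sum_{n=1}^{N_t}1/n$. The only cosmetic difference is that you compute $\bbE[U]$ via the survival-function integral and a binomial identity, whereas the paper integrates against the order-statistics PDF $f_Y(y;N_r=1)$ from \eqref{eq:fy}; both yield the same result.
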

\begin{proof}
For $N_r =1$, the ergodic rate of TAS in \eqref{eq:Rtas} is given as
\begin{align}
	\nonumber
	\bar \cR_{\rm t}(\alpha;N_t,1) &= \bbE\!\left[\max_{i\in \mathcal{T}} \log_2\!\left(\!1+ \frac{\rho\alpha |h_i|^2}{1+ \rho(1-\alpha) |h_{i}|^2}\!\right)\!\right]\\
	\label{eq:R1_proof1}
	& \stackrel{(a)}= \bbE\!\left[ \log_2\!\left(\!1+ \frac{\rho\alpha |h_{i^\star}|^2}{1+ \rho(1-\alpha) |h_{i^\star}|^2}\!\right)\!\right]
\end{align}
where $(a)$ comes from $i^\star = \max_i |h_i|$. 
Similarly to the proof of Theorem \ref{thm:ER1}, the ergodic rate in \eqref{eq:R1_proof1} can be approximated by using Lemma 1 in \cite{zhang2014power}.
Computing $\bbE[Y]$ with $Y = |h_{i^\star}|^2$ and $f_Y(y; N_r = 1)$ in \eqref{eq:fy}, we derive the ergodic rate in \eqref{eq:R1}.
\end{proof}
\begin{corollary}
	The ergodic rate in \eqref{eq:R1} is further approximated in the large transmit antenna regime as
	\begin{align}
		\label{eq:R1_approx}
		\bar{\cR}_{\rm t}(\alpha;N_t,1)\! \approx \! \log_2\left(1+ \frac{\rho\alpha\big(\ln N_t + \gamma_e\big)}{1+\rho(1-\alpha)\big(\ln N_t + \gamma_e\big)}\right)
	\end{align}
		 where $\gamma_e$ is the Euler\--Mascheroni constant.
\end{corollary}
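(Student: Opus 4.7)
The plan is to start from the closed-form expression in \eqref{eq:R1} and recognize the finite sum $\sum_{n=1}^{N_t} \frac{1}{n}$ as the $N_t$-th harmonic number $H_{N_t}$. From here, the whole argument rests on substituting the classical asymptotic expansion of the harmonic numbers into both the numerator and the denominator of the SNR expression inside the logarithm.

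Concretely, I would first invoke the well-known identity
\begin{align}
    \nonumber
    H_{N_t} \,=\, \sum_{n=1}^{N_t}\frac{1}{n} \,=\, \ln N_t + \gamma_e + \frac{1}{2N_t} - \frac{1}{12 N_t^2} + O\!\left(N_t^{-4}\right),
\end{align}
which is standard (it can be obtained, for example, from the Euler–Maclaurin summation formula applied to $1/x$, where $\gamma_e$ is identified as the limit $\lim_{N\to\infty}(H_N - \ln N)$). Then, in the large–$N_t$ regime, all terms of order $1/N_t$ or smaller become negligible relative to $\ln N_t + \gamma_e$, so that $\sum_{n=1}^{N_t} 1/n \approx \ln N_t + \gamma_e$. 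Substituting this into \eqref{eq:R1} — note that the sum appears identically in both the numerator term $\rho\alpha\sum_n 1/n$ and the denominator term $\rho(1-\alpha)\sum_n 1/n$ — immediately yields the claimed expression \eqref{eq:R1_approx}.

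There is essentially no serious obstacle here; the only subtle point worth flagging is that the approximation is applied inside a nonlinear function $\log_2(1 + f(\cdot))$, so one might want to justify that the error propagates benignly. This is straightforward: the map $x \mapsto \log_2\!\left(1 + \frac{\rho\alpha x}{1+\rho(1-\alpha)x}\right)$ is Lipschitz on $[1,\infty)$ with a bounded derivative, so an additive $O(1/N_t)$ error in the argument leads to at most an $O(1/N_t)$ error in the rate, which vanishes in the large $N_t$ regime and is thus consistent with the ``$\approx$'' notation used throughout the paper. This completes the proof sketch.
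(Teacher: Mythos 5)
Your argument is correct and is essentially identical to the paper's own proof, which likewise writes $\sum_{n=1}^{N_t}1/n = \ln N_t + \gamma_e + \epsilon_{N_t}$ with $\epsilon_{N_t}\sim \frac{1}{2N_t}$ and drops the vanishing term. The additional remark on benign error propagation through the logarithm is a nice (if optional) refinement beyond what the paper states.
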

\begin{proof}
	The harmonic series is equivalent to $\sum_{n=1}^{N_t}{1}/{n} = \ln N_t + \gamma_e + \epsilon_{N_t}$ where $\epsilon_{N_t} \sim \frac{1}{2N_t}$.
	Thus, $\sum_{n=1}^{N_t}{1}/{n} \approx \ln N_t + \gamma_e$ for large $N_t$, which leads to \eqref{eq:R1_approx}.
\end{proof}
From \eqref{eq:R1} or \eqref{eq:R1_approx}, as the number of transmit antennas $N_t$ increases, the ergodic rate $\bar{\cR}_t(\alpha;N_t,1)$ increases, which can be considered as the TAS gain. 
{\color{black} To compare with the ergodic rate of TAS, Theorem \ref{thm:ER_siso} derives an upper bound of the ergodic rate with coarse quantization for a single transmit and single receive antenna $\bar{\cR}(\alpha;1,1)$, i.e., the special case of TAS in which there is only one transmit antenna to be selected. }
\begin{figure*}[!t]
\centering
$\begin{array}{c c}
{\resizebox{0.9\columnwidth}{!}
{\includegraphics{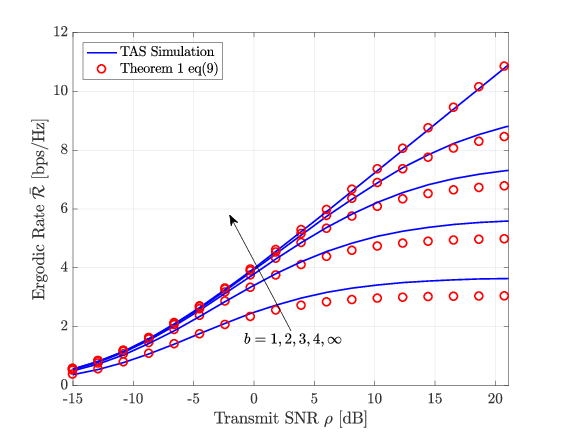}}
}&
{\resizebox{0.9\columnwidth}{!}
{\includegraphics{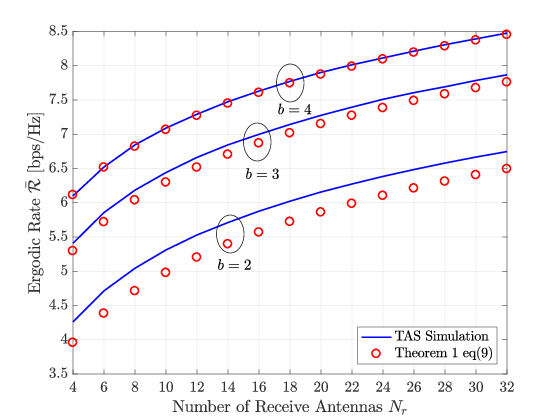}}
}\\ \mbox{\small (a)}&\mbox{\small (b)}
\end{array}$
\caption{
The ergodic rate of single-TAS for $N_t = 32$ transmit antennas with (a) $N_r = 8$ receive antennas and $b\! \in\! \{1,2,3,4\}$ quantization bits, and with (b)  $\rho = 10$ dB transmit SNR and $b\! \in\! \{2,3,4\}$ quantization bits.} 
\label{fig:theorem}
\end{figure*}

\begin{theorem}
	\label{thm:ER_siso}
	The ergodic rate with a single transmit/receive antenna under coarse quantization is upper bounded by
	\begin{align}
		\label{eq:Rrand_UB}
		\bar{\cR}(\alpha;1,1) \leq  \log_2\left(1+ \frac{\rho\alpha}{1+\rho(1-\alpha)}\right).
	\end{align}
\end{theorem}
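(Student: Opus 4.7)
The plan is to obtain \eqref{eq:Rrand_UB} by a direct application of Jensen's inequality to the instantaneous rate expression for $N_t=N_r=1$. From the achievable rate in \eqref{eq:rate} specialized to scalar channels with $\|\bh\|^2=|h|^2$ and Rayleigh fading $h \sim \cC\cN(0,1)$, the ergodic rate is
\begin{align}
\nonumber
\bar{\cR}(\alpha;1,1) = \bbE\!\left[\log_2\!\left(1+\frac{\rho\alpha|h|^2}{1+\rho(1-\alpha)|h|^2}\right)\right].
\end{align}
So the only substantive task is to show that the integrand, viewed as a function of $x=|h|^2\geq 0$, is concave, after which Jensen's inequality plus $\bbE[|h|^2]=1$ immediately yields \eqref{eq:Rrand_UB}.

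First, I would rewrite the argument of the logarithm in the more tractable form
\begin{align}
\nonumber
1+\frac{\rho\alpha x}{1+\rho(1-\alpha)x} = \frac{1+\rho x}{1+\rho(1-\alpha)x},
\end{align}
so that $f(x):=\log_2\!\bigl(1+\rho x\bigr) - \log_2\!\bigl(1+\rho(1-\alpha)x\bigr)$. This difference-of-concaves form is not obviously concave, which is the one subtlety worth addressing carefully.

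Next, I would verify concavity directly by computing $f''(x)$ and showing $f''(x)\leq 0$ for all $x\geq 0$. A quick calculation gives $f''(x) \propto -\frac{1}{(1+\rho x)^2}+\frac{(1-\alpha)^2}{(1+\rho(1-\alpha)x)^2}$, and after taking square roots the inequality $f''(x)\leq 0$ reduces to $(1-\alpha)(1+\rho x)\leq 1+\rho(1-\alpha)x$, i.e.\ $1-\alpha\leq 1$, which holds because $\alpha\in[0,1]$. Hence $f$ is concave on $[0,\infty)$.

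Finally, applying Jensen's inequality to the concave $f$ and using $\bbE[|h|^2]=1$ for $h\sim\cC\cN(0,1)$ gives
\begin{align}
\nonumber
\bar{\cR}(\alpha;1,1) = \bbE[f(|h|^2)] \leq f(\bbE[|h|^2]) = \log_2\!\left(1+\frac{\rho\alpha}{1+\rho(1-\alpha)}\right),
\end{align}
which is precisely \eqref{eq:Rrand_UB}. The only real obstacle is the concavity check, since the naive decomposition into two $\log$ terms is a concave-minus-concave expression; once that is handled via the second-derivative argument above, the rest is immediate.
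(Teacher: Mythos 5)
Your proof is correct, and it takes a genuinely different—and substantially shorter—route than the paper's. The paper first evaluates the expectation in closed form using $|h|^2\sim{\rm Exp}(1)$, obtaining $\bar{\cR}(\alpha;1,1)=\frac{1}{\ln 2}\bigl(e^{1/\rho}\Gamma(0,1/\rho)-e^{1/(\rho\beta)}\Gamma(0,1/(\rho\beta))\bigr)$ with $\beta=1-\alpha$, and then proves the bound as an inequality between differences of exponential-integral terms: it sets $g_1(x)=\ln(1+x)-e^{1/x}\Gamma(0,1/x)$, reduces the claim to monotonicity of $g_1$, and establishes that via the bound $\tfrac12 e^{-y}\ln(1+2/y)<\Gamma(0,y)$ from Abramowitz--Stegun together with a further auxiliary-function analysis. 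Your argument replaces all of that with a single concavity check on $f(x)=\log_2\frac{1+\rho x}{1+\rho(1-\alpha)x}$ followed by Jensen's inequality; the second-derivative computation is right (the square-root step is legitimate since both sides are nonnegative, and the reduction to $1-\alpha\leq 1$ is correct), and $f(1)$ does equal the stated bound. What your route buys: it is elementary, avoids special functions entirely, and generalizes immediately to any fading distribution with $\bbE[|h|^2]=1$, whereas the paper's proof is tied to the Rayleigh/exponential case. What the paper's route buys: an exact closed-form expression for $\bar{\cR}(\alpha;1,1)$ itself, which is useful for assessing the tightness of the bound but is not needed to prove the inequality. Both are complete proofs of the stated result.
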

\begin{proof}
	The ergodic rate for $N_t\! =\! N_r\! =\!1$ with quantization error is given by
	\begin{align}
		\nonumber
		\bar{\cR}(\alpha;1,1) &= \bbE\left[\log_2\left(1+\frac{\rho\alpha|h|^2}{1+\rho(1-\alpha)|h|^2}\right)\right]\\
		\nonumber
		& \stackrel{(a)} = \frac{1}{\ln 2}\left(e^{\frac{1}{\rho}}\Gamma\left(0,\frac{1}{\rho}\right) -e^{\frac{1}{\rho\beta}}\Gamma\left(0,\frac{1}{\rho\beta}\right) \right)
	\end{align}
	where $(a)$ comes from the fact that $|h|^2$ follows the exponential distribution with unit mean, i.e., $|h|^2 \sim {\rm Exp}(1)$, and $\beta = (1-\alpha)$.
	Now, to prove \eqref{eq:Rrand_UB}, we need to show \eqref{eq:inequality} for $\rho >0$.
	\begin{align}
		\label{eq:inequality}
		\ln(1\!+\!\rho)\! -\! \ln(1\!+\!\rho\beta) \geq e^{\frac{1}{\rho}}\Gamma\left(0,\frac{1}{\rho}\right) \!-\! e^{\frac{1}{\rho\beta}}\Gamma\left(0,\frac{1}{\rho\beta}\right).
	\end{align}
	To this end, let $g_1(x) = \ln(1+x) - e^{1/x}\Gamma(0,1/x)$.
	Then, \eqref{eq:inequality} reduces to $g_1(\rho) \geq g_1(\rho \beta)$.
	Since $0<\beta<1$, if $g_1(x)$ is monotonically increasing for $x>0$, \eqref{eq:inequality} holds true, or equivalently, we need to show that $g_2(y) = \ln(1+1/y) - e^{y}\Gamma(0,y)$ is monotonically decreasing for $y>0$.
	The derivative of $g_2(y)$ is given as $g_2'(y) = {1/({1+y})-e^{y}\Gamma(0,y)}$.
	Accordingly, we need to show $g_2'(y) < 0  \Leftrightarrow \frac{e^{-y}}{1+y} < \Gamma(0,y)$ for $y>0$.
	
	To show $\frac{e^{-y}}{1+y} < \Gamma(0,y)$, we use the following inequality: $ \frac{1}{2} e^{-y}\ln(1+2/y) < \Gamma(0,y) $ \cite{abramowitz1964handbook}.
	Then, we need to show the following inequality \eqref{eq:inequality2} to prove $g_2'(y)<0$:
	\begin{align}
		\label{eq:inequality2}
		\frac{e^{-y}}{1+y} < \frac{1}{2} e^{-y}\ln\left(1+\frac{2}{y} \right).	
	\end{align}
	Let $h(y) = \frac{1}{2} (1+y) \ln(1+{2}/{y}) -1$.
	Then, \eqref{eq:inequality2} is equivalent to $h(y) > 0$. 
	Since  $\lim_{y \to 0} h'(y)= -\infty$, $\lim_{y \to \infty} h'(y) = 0$, and $h''(y) > 0$, we have $h'(y) <0$ for $y>0$, i.e., $h(y)$ is monotonically decreasing for $y>0$.
	As $\lim_{y \to 0} h(y)\!=\! \infty$ and $\lim_{y \to \infty} h(y)\! =\! 0$, this proves $h(y)\!>\! 0$, which also proves \eqref{eq:inequality2}.
	Therefore, \eqref{eq:inequality} holds true and this completes the proof.
\end{proof}
\thispagestyle{empty}

We note that due to the TAS gain, the ergodic rate of TAS $\bar \cR_t(\alpha;N_t,1)$ can achieve a higher rate than the upper bound of the ergodic rate with a single transmit antenna in \eqref{eq:Rrand_UB} as the number of transmit antennas $N_t$ increases.
The increase of ergodic rate from the TAS gain for a single receive antenna is limited to $\log_2(1\!+\!\alpha/(1\!-\!\alpha))$ as $N_t\! \to\! \infty$.
This is because selecting the antenna with the larger channel gain also increases the quantization noise variance.
Although the TAS gain under coarse quantization is limited when compared to the TAS gain for perfect quantization, the TAS gain can still provide a large increase of ergodic rate since $\bar \cR_t(\alpha;N_t,1)$ with $N_t \!\to\! \infty$ converges to $\log_2(1\!+\!\alpha/(1\!-\!\alpha))$ which the upper bound of $\bar \cR(\alpha;1,1)$ requires $\rho \to \infty$ to achieve.
\section{Simulation Results}
\label{sec:simulation}

%


In this section, we validate the derived ergodic rates and resulting intuition.
In Fig. \ref{fig:theorem}(a), the approximated lower bound of the ergodic rate of TAS in \eqref{eq:R_LB} is compared with simulation results with respect to the transmit SNR $\rho$ for $N_t = 32$, $N_r = 8$, and $b \in \{1,2,3,4,\infty\}$.
We note that the lower bound in \eqref{eq:R_LB} shows a small gap from the simulation results.
In particular, the gap decreases as $b$ increases or $\rho$ decreases since the quantization error becomes less dominant than the AWGN.  
Accordingly, \eqref{eq:R_LB} can serve as the approximation of the ergodic rate of TAS in the less dominant quantization error regime.
As $b$ increases, the ergodic rate convergence occurs in the higher transmit SNR regime.
For $b =\infty$, the ergodic rate keeps increasing without convergence as $\rho$ increases since the rate is not limited by quantization error.

In Fig. \ref{fig:theorem}(b), we further compare the lower bound in \eqref{eq:R_LB} with the simulation results with respect to $N_r$ for $N_t = 32$, $\rho = 10$ dB, and $b \in \{2,3,4\}$.
For a different number of receive antennas $N_r$, the derived lower bound in \eqref{eq:R_LB} shows a small gap. 
As discussed, the gap decreases as $b$ increases, which makes the quantization error less dominant.
Thus, we again conjecture that \eqref{eq:R_LB} can serve as the approximation of the ergodic rate of TAS in the high-resolution ADC regime.

In Fig. \ref{fig:corollary}(a), the approximated ergodic rates of TAS for $N_r =1$ in \eqref{eq:R1} and \eqref{eq:R1_approx} are compared with simulation results with respect to $\rho$ for $N_t\in\{4, 16, 64\}$ and $b=3$.
We further compare the derived upper bound of the ergodic rate  for $N_t = N_r=1$ in \eqref{eq:Rrand_UB} with simulation results for $b=3$.
The ergodic rates in \eqref{eq:R1} and  \eqref{eq:R1_approx} accurately align with the simulation results, and \eqref{eq:Rrand_UB} provides a valid upper bound for $\bar \cR(\alpha;1,1)$.
We note that the system with TAS achieves a large improvement in ergodic rate from the system with a single transmit antenna ($N_t = 1$).
In addition, as the number of transmit antenna increases, the TAS case achieves higher ergodic rate owing to the TAS gain as shown in \eqref{eq:R1} and \eqref{eq:R1_approx}.

To further analyze the TAS gain, we evaluate the ergodic rate with respect to $N_t$ in Fig. \ref{fig:corollary}(b).
For a small or medium number of antennas, the ergodic rate rapidly increases as $N_t$ increases. 
In the large number of antenna regime, however, the rate of increase becomes slower.
Since the received SNR in low-resolution ADC systems is lower than that in infinite-resolution ADC systems due to quantization error, the increase of the received SNR owing to the TAS gain can provide large improvement of ergodic rate in the small or medium number of antenna regime.
{\color{black} For example, the ergodic rate with TAS for $N_t=12$ transmit antennas in Fig.~\ref{fig:corollary}(b) shows $1.87\times$ increase from the single transmit antenna $N_t =1$ case $\bar\cR(\alpha;1,1)$ in Fig.~\ref{fig:corollary}(a) for $b = 3$ and $\rho = 5$ dB.}
Consequently, we can use a moderate number of transmit antennas to obtain a proper TAS gain in low-resolution ADC systems.
Overall, the simulation results validate the derived ergodic rates and confirm the TAS gain for low-resolution ADC systems.

\begin{figure}[!t]
\centering
$\begin{array}{c c}
{\resizebox{0.9\columnwidth}{!}
{\includegraphics{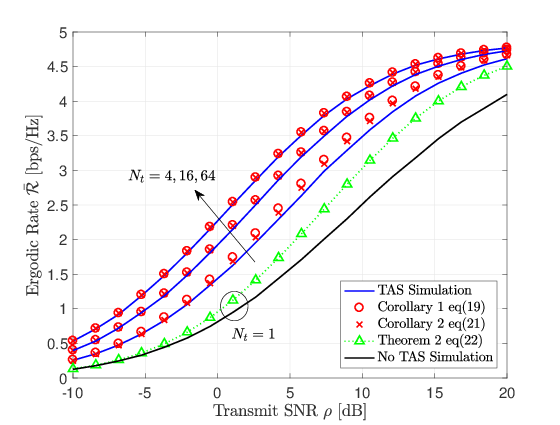}}
}\\\mbox{\small (a)} \\
{\resizebox{0.9\columnwidth}{!}
{\includegraphics{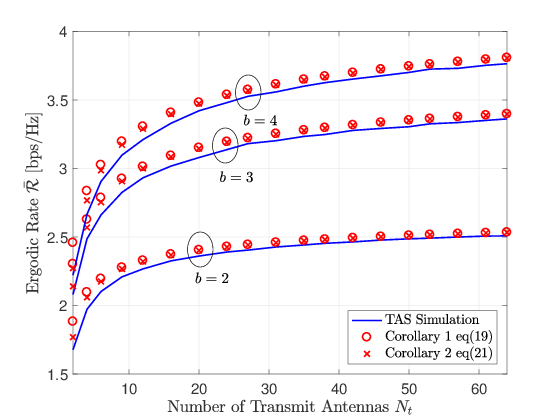}}
}\\\mbox{\small (b)}
\end{array}$
\caption{
The ergodic rate of TAS for $N_r =1$ receive antenna with (a) $N_t \in \{4, 16, 64\}$ transmit antennas and $b = 3$ quantization bits, and with (b) $\rho= 5$ dB transmit SNR and $b \in \{2,3,4\}$ quantization bits.} 
\label{fig:corollary}
\end{figure}

\section{Conclusion}
\thispagestyle{empty}

We investigate single transmit antenna selection in low-resolution ADC systems.
We adopt the approximated PDF and CDF of the sum of Weibull distributions to address the challenge in analyzing the ergodic rate with quantization error.
Leveraging the approximated distribution and order statistics,
we derive the approximated lower bound of the ergodic rate with TAS.
For a single receive antenna case, we further derive the approximated ergodic rate with TAS in closed form.
The analysis shows that
the TAS method achieves large improvement with a moderate number of transmit antennas in ergodic rate.
The simulation results validate the derived ergodic rate expressions and intuition regarding the TAS method in low-resolution ADC systems.
{\color{black} The ergodic rate analysis of the TAS method under the presence of correlated channels or imperfect channel state information is desirable for future work.}

\bibliographystyle{IEEEtran}
\bibliography{P2PAntSel.bib}
\end{document}